\DeclareMathAlphabet{\mathdj}{U}{msb}{m}{n}  
\newcommand{\Reals}{\ensuremath{\mathdj {R}}} 
\newbox\squ  
\def\sqbox{\copy\squ\hskip -.4pt}
\newenvironment{proof}{\topsep=\smallskipamount \partopsep=0pt  %
 \begin{trivlist} \itemindent=\parindent                        %
 \item[\hskip \labelsep\emph{Proof:}]}{\qed\end{trivlist}}      %
\let\qed=\relax                                                 %
\def\qed                                                        %
   \quad\hbox{}\nobreak\hfil $\sqbox$                             %
\newtheorem{prop}{Proposition}
\begin{document}

\title[The Conjecture of Chaotic Cosmological Inhomogeneity]{Dynamical Equilibrium States of a Class of Irrotational
	Non-Orthogonally Transitive $G_{2}$ Cosmologies I:
	The Conjecture of Chaotic Cosmological Inhomogeneity}

\author{C. G. Hewitt}

\address{$^1$ University of Waterloo, C.E.M.C., Faculty of Mathematics, Ontario, Canada}
 
\ead{cghewitt@uwaterloo.ca}
\vspace{10pt}
\begin{indented}
\item[]March 2021
\end{indented}

\begin{abstract}

The Einstein field equations for a class of irrotational non-orthogonally
transitive $G_{2}$ cosmologies are written down as a system of partial
differential equations. The equilibrium points are self-similar and can be
written as a one-parameter, five-dimensional, ordinary differential equation.
The corresponding cosmological models both evolve and have one-dimension of
inhomogeneity. The major mathematical features of this ordinary differential
equation are derived, and a cosmological interpretation is given. The
relationship to the exceptional Bianchi models is explained and exploited to
provide a conjecture about future generalizations.
\end{abstract}

%
%
%
%
%

\section{Introduction:}

In order to obtain cosmological models that have spatial structure, one is
forced to examine inhomogeneous cosmologies. The simplest evolving
inhomogeneous cosmologies admit two spacelike isometries. In a previous work,
Hewitt and Wainwright \cite{hewitt1990orthogonally}, we defined the $G_{2}$ cosmologies to be
the spacetimes admitting two commuting spacelike Killing vector fields (KVFs)
and satisfying the Einstein field equations (EFEs) with a perfect fluid
source. The EFEs for that subclass in which the $G_{2}$ acts orthogonally
transitively (OT), were written as a first order system of quasilinear partial
differential equations (PDEs). The equilibrium points of that system were
defined to be those models whose \textit{dimensionless} \textit{variables} do
not evolve, and were labelled "dynamical equilibrium states," (DES). The
corresponding cosmological models are self-similar, admitting an $H_{3}$
acting on the timelike three spaces containing the orbits of the $G_{2}$ and
the fluid flow vector. The Einstein field equations for these models reduce to
a system of ordinary differential equations (ODEs), which describes the
spatial structure of these models. A qualitative analysis of the dynamical
equilibrium states of the orthogonally transitive $G_{2}$ cosmologies was
given in \cite{hewitt1997dynamical}, and their asymptotic spatial structure was determined.
The DES of the OT $G_{2}$ cosmologies are interesting cosmological models in
their own right: open sets of these models are well behaved, having a big-bang
initial singularity and they are asymptotically spatially homogeneous at large
spatial distance, with their inhomogeneity only being significant over a
finite spatial region. In addition, it was conjectured in \cite{hewitt1990orthogonally} that these DES
may act as asymptotic states for less symmetric cosmological models at either
early or late times, or both, and explicit examples were provided with the
exact solutions of \cite{feinstein1989new}, \cite{senovilla1990new}, their
generalisations \cite{hewitt1990orthogonally}, and \cite{wainwright1980some}.
The main goal of the current paper is to extend this approach to the $G_{2}$
cosmologies in which the Abelian $G_{2}$ does not necessarily act orthogonally
transitively, and for which the fluid flow vector is orthogonal to the group orbits. We refer to
these models as the irrotational N-OT $G_{2}$ cosmologies. We complete this
goal in three major stages: first of all we write down the EFEs for the
irrotational N-OT $G_{2}$ cosmologies, then we write down the EFEs for the DES
of the irrotational N-OT $G_{2}$ cosmologies, and in the final stage we
provide a qualitative analysis of the resulting ODE and make some conclusions
about these cosmological models. A secondary goal is to provide comparison
between the structure and features of both the differential equations and
the corresponding cosmological models for the DES of the irrotational N-OT
$G_{2}$ cosmologies and the exceptional orthogonal spatially homogeneous (OSH) cosmologies. The third goal is to extend these comparisons further by introducing the conjecture of chaotic spatial inhomogeneity. In the companion paper, (Rashidi, Hewitt and Charbonneau \cite{work-in-progress}), we provide a complete mathematical analysis of the DES of the subclass in which at least one of the KVFs is hypersurface orthogonal (HO).\\
We assume throughout that:
\begin{itemize}
	\item H1: spacetime admits a two dimensional Abelian isometry subgroup,
	$G_{2}$, and its orbits are spacelike.
		
	\item H2: the source is a perfect fluid with a linear equation of state,
	$p=(\gamma-1)\mu$. Here $p$ is the pressure, $\mu$ is the energy density and
	$\gamma$ is the equation of state parameter ($1<\gamma<2)$.
\end{itemize}
These are the $G_{2}$ cosmologies.
\begin{itemize}
	\item H3: the isometry subgroup, $G_{2}$, does not (necessarily) act
	orthogonally transitively.
\end{itemize}

These are the N-OT $G_{2}$ cosmologies. Just for clarification, note that OT $G_2$ cosmologies are also N-OT $G_2$ cosmologies.\\
The fluid 4-velocity for the OT $G_{2}$ cosmologies is \textit{forced} to be orthogonal to the $G_{2}$ orbits. If we do not restrict the action of the isometry group then the fluid 4-velocity does not have any further
constraints. For simplicity we assume that:\smallskip
\begin{itemize}
	\item H4: the fluid velocity, $\mathbf{u}$ is orthogonal to the orbits of the isometry subgroup.
\end{itemize}
An immediate consequence of this restriction is that the fluid is forced to be irrotational. These are the irrotational N-OT $G_{2}$ cosmologies.

The \textbf{first stage} of the main goal of this paper is completed in the appendix by writing the EFEs for the irrotational N-OT $G_{2}$ cosmologies as a first order system of quasilinear partial differential equations (PDEs) in terms of dimensionless dynamical variables. Dynamical equilibrium states are also defined there, and it is explained that the corresponding cosmological models are self-similar, admitting a three parameter similarity group, $H_{3}$, acting on the three spaces containing the $G_{2}$ orbits and $\mathbf{u}$.\footnote{A three-parameter similarity group must admit a two-parameter isometry subgroup and in all but one case (see \cite{eardley1974commun}) this subgroup is Abelian.}. Consequently, in the main part of the paper (i.e. all except the
appendix), we assume that:
\begin{itemize}
	\item H5: spacetime admits a three-dimensional similarity group ($H_{3}$) having
	timelike orbits, and 
	
	\item H6: the fluid velocity $\mathbf{u}$ is tangential to the orbits of the similarity group.
\end{itemize}

In section 2 we provide further motivation for studying cosmological models which adhere to H1-H6. We introduce a subclassification scheme which shows the connection of the DES of the irrotational N-OT $G_{2}$ cosmologies to previously studied self-similar and spatially homogeneous cosmologies. This both partially addresses the second goal and leads naturally to the completion of the third goal.\\
The EFEs for the DES of the OT $G_{2}$ cosmologies may be written as a
\textbf{4-dimensional} system of ODEs with \textbf{two parameters}. One of the variables is a shear variable and one of the parameters is a\ \textit{shear parameter}. In section 3, we complete the \textbf{second stage }of our primary goal, by writing the EFEs for the DES\ of the irrotational N-OT $G_{2}$ cosmologies as an autonomous system of ODEs, with no constraints. The particular system is \textbf{5-dimensional} and has \textbf{one parameter}: the above-mentioned \textit{shear parameter} is fixed to a specific value, and, in addition, the models allow an additional (off-diagonal) shear variable.

The symmetries and invariant sets of the ODE are also listed in section 3. In section 4, we analyse the local stability of the equilibrium points; outline the main features of this ODE; deduce some of the properties of the solutions; make statements about open sets of trajectories, and complete the third stage of our primary goal by giving some cosmological interpretations. In section 5, we conclude by completing our second goal of discussing the analogies between the analyses of the exceptional OSH\ cosmologies and the\ DES of the N-OT irrotational $G_{2}$ cosmologies.\\
It is assumed that the reader is familiar with the orthonormal frame formalism of Ellis and MacCallum \cite{Cargesemaccallum1973cosmological}. The reader is referred to the book \cite{wainwright2005dynamicalSystems} for a thorough treatment of the techniques and methods of dynamical systems, especially when applied to cosmology. We use geometrised units with $c=1$, $8\pi G=1$, and the sign conventions of \cite{Cargesemaccallum1973cosmological}. Throughout this paper lower case indices $a,b,c$ assume the values $0,1,2,3$ while upper case indices $A,B,C$ assume the values $2,3$. \\

\section{Motivation:}

In this section we provide reasons why cosmological models
satisfying H1-H6 are worth examination. First of all,

\begin{itemize}
	\item Many of these models are well-behaved inhomogeneous cosmological models themselves.
	
	\item Inhomogeneous self-similar models have been shown to act as asymptotic
	states, at both early and/or late times, for less symmetric (non-self-similar)
	cosmological models. Thus the spatial structure of these models may provide
	insight into the possible spatial structure of generic cosmological solutions
	of the EFEs.
	
	\item These models generalise well-known spatially homogeneous models which
	appear as equilibrium points in phase space. These are some plane wave models;
	the Collins $VI_{h}$ models, the Wainwright $VI_{\frac{-1}{9}}$ model and the
	Robinson-Trautman solution.
	
	\item The additional variables which arise when moving from the OT models to
	the N-OT models are shear variables. This work can be interpreted as an
	examination of the effects that additional shear has on the
	inhomogeneity of cosmological models.
\end{itemize}

Secondly, it is useful to be able to visualise where the models considered in
this paper lie relative to previously considered models. In order to make this
comparison, it is instructive to recall, and utilise, the paper of \cite{wainwright1981exact}. In that work, spacetimes that admit an Abelian $G_{2}$ were
classified according to the nature of the action of the $G_{2}$. In addition
to whether or not the Abelian $G_{2}$ acts OT, this classification considers
whether or not there are HO KVFs. In order to clearly illustrate how the
models considered in this paper are related to previously considered
inhomogeneous models, we superimpose the Wainwright classification onto
cosmologies which satisfy restrictions H1, H2 and H4-H6 in Table \ref{TableofWainwrightSub}.
\begin{table}[h!]
\caption{Wainwright Subclassification for models satisfying H1, H2 and H4-H6.}
\label{TableofWainwrightSub}
\begin{indented}
\item[]\begin{tabular}{@{}*{5}{l}}
		\br
		{Class} & {Geometric Restriction} & Variables &
		{Parameters} & { Reference}\\ \hline
		{A(i)} & {not OT no HO KVFs} & 5 & 1 &
		{Sections 3,4}\\ \hline
		{A(ii)} & { not OT 1 HO KVF} & 3 & 1 &
		\cite{work-in-progress}\\ \hline
		{B(i)} & {OT no HO KVFs} & 4 & 2 & \cite{hewitt1997dynamical}\\ \hline
		{B(ii)} & {OT 2 orthogonal HO KVFs} & 2 & 3 &
		{\cite{hewitt1988qualitativeI} and \cite{hewitt1991qualitativeII}}\\ \hline
\end{tabular}
\end{indented}
\end{table}
\begin{itemize}
	\item It is of interest to complete the examination of the inhomogeneous cosmological models of Table \ref{TableofWainwrightSub}.
	
	\item Most studies on inhomogeneous cosmology consider models which admit an OT $G_{2}$ (see, for example the book, \cite{krasinski1996overview}). The current work considers a large class of models in which there is a N-OT $G_{2}$, and thus progress is being made towards the possibility of understanding the most general cosmological model, i.e. one which admits no symmetries
\end{itemize}
Thirdly, the models considered in this paper are very closely related to
(some) OSH\ models. In fact, the EFEs for both classes can be obtained by applying additional restrictions to the PDE in the appendix for the EFEs of the irrotational N-OT $G_{2}$ cosmologies. In order to make these similarities explicit, we have found it useful to produce another table for the OSH models. We first need to refine the hypotheses H5 and H6 by, H$^{\ast}$5 and H$^{\ast}$6:
\begin{itemize}
	\item H$^{\ast}$5: spacetime admits a three-parameter isometry group ($G_{3}$) having spacelike orbits, and
	
	\item H$^{\ast}$6: the fluid velocity $\mathbf{u}$ is orthogonal to the orbits of the
	$G_{3}$.
	
	\item On comparing these two tables, it is clear that the subdivision into the various subclasses is identical.
\end{itemize}
\begin{table}[h!]
\caption{Wainwright Subclassification for models satisfying H1, H2 and H4, H*5 and H*6.}
\label{TableofWainwrightSub0}
\begin{indented}
	\item[]\begin{tabular}{@{}*{5}{l}}
		\br
		{Class} & {Geometric Restriction} & Variables &
		{Parameters} & { Reference}\\ \hline
		{A(i)} & {not OT no HO KVFs} & 5 & 1 &
		{\cite{hewitt2003asymptotic}}\\ \hline
		{A(ii)} & { not OT 1 HO KVF} & 3 & 1 &
		 \cite{hewitt1991algebraic} \\ \hline
		{B(i)} & {OT no HO KVFs} & 4 & 2 & \cite{uggla1989asymptotic},\cite{hewitt1997bianchi}
		\\ \hline
		{B(ii)} & {OT 2 orthogonal HO KVFs} & 2 & 3 &
		 \cite{collins1971more}\\ \hline
	\end{tabular}
\end{indented}
\end{table}
In the conclusion we emphasise the similarities and differences between the N-OT subclasses in the self-similar and spatially homogeneous cases. Examination of this class of models provides a first step in the
consideration of the Chaotic Inhomogeneity Conjecture, which we now outline.

In \cite{hewitt1990orthogonally} and \cite{hewitt1997dynamical} the role played by the \textit{spatially homogeneous}
plane wave solutions as asymptotic states for \textit{spatially inhomogeneous
}models at large \textit{distance has been revealed}. There are several
analogies with the role played by the Kasner models as asymptotic states at
\textit{early times} for some OSH models (i.e. those which are also OT $G_{2}$
cosmologies). However, it is also well known that for the most general OSH
models and generic tilted spatially homogeneous models (i.e. those containing
the general tilted Bianchi II model as a subclass) the extra variables
involved have the effect of destabilising the Kasner equilibrium points and
turning them into saddle points. The dynamical effect of this is rather
drastic: a typical trajectory is no longer past asymptotic to \textit{a single
	equilibrium point}, but bounces around the phase space as it is directed from
one saddle point to another. The corresponding cosmological model has an
oscillatory initial singularity, and the word "chaos" is
often associated with the dynamical behaviour. One of the goals of this paper
is to consider whether the analogy between Kasner models and plane wave models
extends further, that is we consider the conjecture:\\ \\

\begin{center}
	\textbf{The Conjecture of Chaotic Cosmological Inhomogeneity.\footnote{This
			conjecture was briefly discussed at both the 13$^{th}$ Canadian
			Conference on General Relativity and the 12$^{th}$ Marcel Grossmann Meeting \cite{hewitt2012conjecture} }  }
\end{center}

\textit{The spatial structure of a generic spatially inhomogeneous
	cosmological model is chaotic. Generic spatially inhomogeneous cosmological
	models possess an oscillatory nature: as one moves in any given spatial
	direction the model is vacuum dominated and may be approximated by a plane
	wave model over a finite volume and then further on it may be approximated by
	another plane wave model. This sequence extents indefinitely, and between any
	two plane wave regimes there is smaller matter dominated region. The
	differential equation which determines the sequence of oscillations between
	the intermediate plane waves states is chaotic.}

The observational consequence for inhomogeneous cosmological models which
satisfy this conjecture is rather simple: a typical cosmological model is
approximately vacuum (i.e. very low energy density) almost everywhere, but not
the same type of vacuum solution. Between different vacuum regions there is a
comparatively small region where the energy density is not insignificant, and
matter close to the boundary between two regions would be accelerated towards
the smaller matter dominated region. On a large scale, these predictions are
similar to actual observations of the large scale structure of the universe as
voids (lower energy density regions) and filaments (smaller regions of higher
energy density), see for example \cite{cautun2014evolution} et al.

The research which has been carried out to produce this paper can be
considered as a first attempt to verify the Chaotic Inhomogeneity Conjecture.
On considering the analogy from the spatially homogeneous models, we expect
that it is necessary to examine inhomogeneous cosmological models in which
any $G_{2}$ does not act orthogonally transitively. Models which satisfy
assumptions H1-H6 provide a relatively simple and accessible class, however
their analysis is complicated as the phase space is non-compact.
We show that the effect of the additional shear variable is indeed to
destabilise most points on the curve of the plane-wave models, however there still
remains segments which act as a source or as a sink. We conclude that an
examination of an even more general class is required; in that case, there is
the possibility that the additional variable(s) would further destabilise the
plane wave arcs and change the stability of equilibrium points to saddles. We remark that Coley and Hervik have shown in \cite{coley2005dynamical} that for a specific class of plane wave models, each plane wave equilibrium point possesses an unstable manifold of dimension at least one, when the phase space is large enough.

\section{ EFEs for the DES of the Irrotational N-OT $G_{2}$ Cosmologies}

The dimensionless form of the EFEs for the DES of the irrotational N-OT
$G_{2}$ cosmologies have been developed in the appendix. There is rotational
freedom in the group invariant orthonormal frame and there are constaints on
the system of ODEs. When examining the OT models, \cite{hewitt1990orthogonally}, we found it convenient to
use group-invariant variables, that is variables that are not affected by the
rotational freedom in the orthonormal frame. The same approach is not suitable
for the N-OT models due to the large number of additional constraint equations
which are introduced. Instead, we choose to make use of the rotational freedom
of the orthonormal frame to set the dimensionless shear variable $\Sigma_{12}$
to zero. The same approach has been employed in other studies, including the
examination of the exceptional orthogonal Bianchi models in \cite{hewitt2003asymptotic}.

Once $\Sigma_{12}$ has been set to zero, it then follows from restriction (\ref{A.36b}) that $R=3 \tilde{\Sigma}_{23}$; from restriction (\ref{A.36a}), and the definition of
$k$ in (\ref{A.34}), that $\tilde{\Sigma}_{22}=\frac{1}{4}(6-5\gamma)$, and from
equation (\ref{A.30f}), for the spatial derivative of $\Sigma_{12}$, that
$N_{+}=3\tilde{N}_{22}(\iff N_{33}=0)$. The remaining independent non-zero
variables are the three spatial curvature variables:\ $A,\tilde{N}_{23}
,\tilde{N}_{22}$, and the two shear variables: $\tilde{\Sigma}_{23}$ and
$\Sigma_{13}$. The acceleration and the energy density may be expressed, as
quadratic and cubic polynomials respectively, in terms of the other variables.\\
The EFEs reduce to: \textit{Spatial structure equations:}
\begin{eqnarray}
\eqalign{
\boldsymbol \partial_{1}(A) =2[A^{2}-(1+\Sigma_{+})^{2}]+\frac{3}{2}(2-\gamma
)\Omega+9\Sigma_{13}^{2},\label{2.1a}\\
\boldsymbol \partial_{1}(\tilde{N}_{22}) =2(A+3\tilde{N}_{23})\tilde{N}_{22}
+(6\tilde{\Sigma}_{22}+2-q)\tilde{\Sigma}_{23},\label{2.1b}\\
\boldsymbol \partial_{1}(\tilde{N}_{23}) =2A\tilde{N}_{23}+6\tilde{\Sigma}_{23}
^{2}-6\tilde{N}_{22}^{2}+(q-2)\tilde{\Sigma}_{22}-3\Sigma_{13}^{2},\label{2.1c}\\
\boldsymbol \partial_{1}(\tilde{\Sigma}_{23}) =-6(\tilde{\Sigma}_{23}\tilde{N}
_{23}+\tilde{N}_{22}\tilde{\Sigma}_{22}),\label{2.1d}\\
\boldsymbol \partial_{1}(\Sigma_{13})   =3(A+\dot{U}-N_{23})\Sigma_{13}.\label{2.1e} }
\end{eqnarray}
\textit{Defining Equations for $\dot{U}$ and $\Omega$}:
\numparts
\begin{eqnarray}
\fl 3(2-\gamma)\dot{U} =(3\gamma-2)A+12\tilde{\Sigma}_{23}\tilde{N}
_{22}-12\tilde{\Sigma}_{22}\tilde{N}_{23},\label{2.2a}\\
\fl 3(\gamma-1)\Omega =A(A-6\dot{U})-9\tilde{\Sigma}_{22}^{2}-9\tilde{\Sigma
}_{23}^{2}-9\tilde{N}_{22}^{2}-9\tilde{N}_{23}^{2}\nonumber\\
\fl \phantom{3(\gamma-1)\Omega=} -(1+\Sigma_{+})(1+7\Sigma_{+})+9\Sigma_{13}^{2},\label{2.2b} 
\end{eqnarray}
\endnumparts
where
\begin{eqnarray}
\fl \Sigma_{+}=\frac{-(3\gamma-2)}{4},\quad\tilde{\Sigma}_{22}=\frac
{(6-5\gamma)}{4},\quad q=\frac{(3\gamma-2)}{2}.\label{2.3}
\end{eqnarray}
\textit{Auxiliary equation:}
\begin{eqnarray}
\fl \boldsymbol \partial_{1}(\Omega)=-3\frac{(2-\gamma)}{\left(  \gamma-1\right)  }\dot
{U}\Omega.\label{2.4}
\end{eqnarray}
The vector $\mathbf{Z}=(A,\tilde{N}_{23},\tilde{N}_{22},\tilde
{\Sigma}_{23},\Sigma_{13})$ determines the dynamical state of the cosmological model. The EFEs for the models under consideration have been written as a \textbf{five-dimensional} differential equation, for the dependent variable $\mathbf{Z}$, in terms of the independent variable $\tilde{x}$, where $\boldsymbol \partial_{1}=\frac{d}{d\tilde{x}}$, with \textbf{one parameter}, $\gamma$. The system admits the two discrete symmetries:
\begin{eqnarray}
\fl (\tilde{x},A,\tilde{N}_{23},\tilde{N}_{22},\tilde{\Sigma}_{23},\Sigma_{13 }) \mapsto (\tilde{x},A,\tilde{N}_{23},\tilde{N}_{22},\tilde{\Sigma}_{23},-\Sigma_{13}) \label{2.5a}\\
\fl (\tilde{x},A,\tilde{N}_{23},\tilde{N}_{22},\tilde{\Sigma}_{23},\Sigma_{13 }) \mapsto (-\tilde{x},-A,-\tilde{N}_{23},-\tilde{N}_{22},-\tilde{\Sigma}_{23},\Sigma_{13}) \label{2.5b}
\end{eqnarray}
As a consequence of the first symmetry and the fact that $\Sigma_{13}=0$ is an invariant set, we need to only consider models with $\Sigma
_{13}\geq0$ from this point onward.

Since the expression for the energy density $\Omega$ is cubic, the vacuum part
of the boundary of the phase space is not compact and consequently the
physically relevant region of phase space is also non-compact. We have been
unable to find a particularly elegant and natural way of compactifying the
complete system, however, in section 4, we obtain a number of important
deductions about the corresponding cosmological models.\\
We refer to the physical region of phase space to mean the subset of $\Reals^{5}$ with $\Sigma_{13}\geq0$ and $\Omega\geq0$.\\
In order to evaluate the physical variables, we must first obtain the
expansion scalar $\theta$ by examination of the decoupled equations derived from (\ref{A.27}), (\ref{A.29}), and (\ref{A.20}):
\begin{eqnarray}
\boldsymbol \partial_{0}(\theta)=-(\frac{3\gamma}{2})\theta,\qquad \boldsymbol \partial_{1}(\theta)=-3\dot{U}\theta. \label{2.6}
\end{eqnarray}
Combining these with the auxiliary equation (\ref{2.4}), it is clear that we have,
\begin{eqnarray}
\theta=\theta_{0}\Omega(\tilde{x})^{\frac{\gamma-1}{2-\gamma}}e^{\frac
	{-3\gamma\tau}{2}},\mbox{ and }\mu=\mu_{0}\Omega(\tilde{x})^{\frac{\gamma
	}{2-\gamma}}e^{-3\gamma\tau},\label{2.7}
\end{eqnarray}
where $\theta_{0}$ and $\mu_{0}$ are constants. The $H_{3}$ orbits are labelled by the co-ordinate $\tilde{x}$, and the $G_{2}$ orbits are given by $\tilde{x}=$ constant, $\tau=$ constant. It follows that the energy density $\mu$ is bounded on any spatial slice $\tau=$ constant if and only if the dimensionless density parameter, $\Omega$, is bounded on that slice.\\
There are a number of invariant sets contained within this system of
equations, these are given in Table \ref{TableOfInvariantSets}.
\begin{table}[h!]
    \caption{Invariant Sets}
	\label{TableOfInvariantSets}
\begin{indented}
\item[]\begin{tabular}{@{}*{5}{l}}
		\br
		Restriction & Restriction on the Variable & Reference\\\hline
		$G_{2}$ acts OT & $\Sigma_{13}=0$ & \cite{hewitt1997dynamical}\\\hline
		$G_{2}$ has one HO KVF & $\tilde{\Sigma}_{23}=0\quad\tilde{N}_{22}=0$ & \cite{work-in-progress} \\\hline
		$G_{2}$ has two HO KVFs & $\Sigma_{13}=0\quad\tilde{\Sigma}_{23}=0\quad
		\tilde{N}_{22}=0$ & \cite{hewitt1988qualitativeI} and \cite{hewitt1991qualitativeII} \\\hline
		Vacuum & $\Omega=0$ & \\\hline
	\end{tabular}
\end{indented}
\end{table}
\section{Stability of the Equilibrium Points and Cosmological Consequences}

In Table \ref{equilibriumPointsOfSystem} we list the equilibrium points of system (\ref{2.1a}), and then we proceed to discuss their local stability and role in phase space, by examining the eigenvalues of the linearization at each equilibrium point.\\

We introduce the non-negative quantity, $s$, given by $s^{2}=(3\gamma
-2)(2-\gamma)$, and we define the quadratics $Q_{1}(\gamma),Q_{2}(\gamma)$ and
$Q_{3}({\small \Sigma}_{13})$ by 
\begin{eqnarray}
\fl Q_{1}(\gamma)=36\gamma-20-15\gamma^{2}, \  \mbox{ which is non-negative for }\gamma\in(1,\frac{3}{2}],\label{4.1a}\\
\fl Q_{2}(\gamma)=(10-7\gamma)(\gamma-1),\mbox{ and }Q_{3}({ \Sigma}
_{13})={4+81\Sigma_{13}^{2}}.\label{4.1b}
\end{eqnarray}
\begin{table}[h!]
	\caption{Equilibrium Points of the System \ref{2.1a}.}
	\label{equilibriumPointsOfSystem}
\begin{tabular}{@{}*{7}{l}}
		\br
		Name & $A$ & $\tilde{N}_{22}$ & $\tilde{N}_{23}$ & $\tilde{\Sigma}_{23}$ &
		$\Sigma_{13}^{2}$ & $\Omega$\\\hline
		Collins & $\frac{\mp3(6-5\gamma)s}{4(3\gamma-2)}$ & $0$ & $\frac{\mp s}{4} $ &
		$0$ & $0$ & $\frac{3Q_{2}(\gamma)}{3\gamma-2}$\\\hline
		Wainwright & $\frac{\pm\sqrt{Q_{3}({\small \Sigma}_{13})}}{3\sqrt{6}}$ & $0$
		& $\frac{\pm\sqrt{Q_{3}({\Sigma}_{13})}}{3\sqrt{6}}$ & $0$ & $0\leq\Sigma_{13}^{2}\leq\frac{5}{81}$ & $\frac{5-{\Sigma}_{13}^{2}
		}{90}$\\\hline
		Robinson-Trautman & $\frac{\pm\sqrt{6}{\small (3\gamma-4)}}{4}$ & $0$ &
		$\frac{\pm\sqrt{6}}{6}$ & $0$ & $\frac{Q_{1}(\gamma)}{24}$ & $0$\\\hline
		Plane-Wave & $\frac{\pm3{(2-\gamma)}}{4}$ & $\mp\tilde{\Sigma}_{23} $ &
		$\frac{\pm(6-5\gamma)}{4}$ & $\tilde{\Sigma}_{23}$ & $0$ & $0$\\\hline
	\end{tabular}
\end{table}
\subsection{The Collins Equilibrium Points.}
These equilibrium points exist in the physical region of phase space for values of the equation of state parameter satisfying $1<\gamma\leq\frac{10}{7}$. For values of $\gamma$ in this interval, but $\gamma\neq\frac{6}{5} $, it is found that the real part of three of the eigenvalues take one sign while the real part of the fourth eigenvalue takes the opposite sign, and the fifth eigenvalue is $\frac{3s(10-9\gamma)}{2(2-3\gamma)}$. We conclude that the Collins equilibrium points are saddle points when they lie in the physical region of phase space.\\

There are three values of $\gamma$ which are bifurcation values for the Collins equilibrium points. First of all, when $\gamma=\frac{10}{9}$ the fifth eigenvalue changes sign as a line bifurcation transfers (one dimension of) stability to the Robinson-Trautman equilibrium point via the Wainwright line of equilibrium points. Secondly, when $\gamma=\frac{6}{5}$, two of the eigenvalues are purely imaginary and there are closed curves on the corresponding invariant set as there is a degenerate Hopf bifurcation (see \cite{hewitt1988qualitativeI}). Two of the other eigenvalues achieve one sign while the final one achieves the opposite sign, and so these points are still saddle points in the full five-dimensional phase space. Thirdly, when $\gamma=\frac{10}{7}$, the Collins points pass through the vacuum boundary via the plane wave curve and they have two zero eigenvalues, one zero corresponds to the coallescing, and the other for coallescing with a line of equilibrium points, this is a
transcritical bifurcation.
\subsection{The Wainwright Curves of Equilibrium Points. ($\gamma=\frac{10}{9}$ only).}

When the equation of state parameter is equal to $\frac{10}{9}$, then there is
a curve of equilibrium points in phase space, given by:
\begin{eqnarray}
\fl \Sigma_{13}\in \Reals, 54A^{2}-{\small 4-81}\Sigma_{13}^{2}=0, \tilde{N}_{23}=A, \tilde{N}_{22}=\tilde{\Sigma}_{23}=0.\label{4.2}
\end{eqnarray}
The energy density along this curve is given by
\begin{eqnarray}
\fl \Omega=\frac{5-81{\Sigma}_{13}^{2}}{9}, \label{4.3}
\end{eqnarray}
and is non-negative iff $0\leq\Sigma_{13}^{2}\leq\frac{5}{81}$. We
have two finite hyperbolic arcs of Wainwright equilibrium points in the
physical region of phase space.\\
One of the five eigenvalues is zero due to the fact that there is a curve of equilibrium points. The other four are given by $[A\pm\frac{1}{3}
\sqrt{441A^{2}-12}],[A\pm\sqrt{1521A^{2}-192}]$. When $A$ is
positive then the real parts of these eigenvalues are $(+,-,+,+)$ for $0\leq A^{2}<\frac{12}{95}$, $(+,-,+,0)$ for $\frac{12}{95}=A^{2}$, and $(+,-,+,-)$ for $\frac{12}{95}<A^{2}\leq\frac{1}{6}$. When $A$ is negative then all these signs switch to the opposite value. Thus the Wainwright curves of equilibrium points are saddles in phase space. The change in sign of one of the eigenvalues arises as there is a saddle-node bifurcation occuring on the one parameter family of invariant two-spaces, see \cite{work-in-progress}.

\subsection{The Robinson-Trautman Equilibrium Points.}

At each of these equilibrium points, two of the eigenvalues are positive and two of the eigenvalues are negative. The fifth eigenvalue is proportional to the factor $(10-9\gamma)$. This is expected as each one of them lies on one of the Wainwright curves of equilibrium points when $\gamma=\frac{10}{9}$, and so it must have a zero eigenvalue at $\gamma=\frac{10}{9}$. These equilibrium points are saddles in phase space for all values of $\gamma$ in the range $(1,\frac{3}{2}]$.

\subsection{The Plane Wave Equilibrium Points.}

The eigenvalues on the lines of plane wave equilibrium points are:\\
\begin{eqnarray}
\fl 0, \mbox{\quad} \pm\frac{3}{(\gamma-1)}[4\tilde{\Sigma}_{23}^{2}-Q_{2}(\gamma
)], \mbox{\quad}\pm\frac{3}{2}[(2-\gamma)\pm8\tilde{\Sigma}_{23}i], \mbox{\quad} \pm\frac{3}{2(2-\gamma)}[Q_{1}(\gamma)-8\tilde{\Sigma}_{23}^{2}], \nonumber
\end{eqnarray}
where the leading sign is choosen to be the same sign as the sign of the variable
$A$ at the equilibrium point. The zero eigenvalue arises because the plane
waves form a line of equilibrium points. The last eigenvalue determines the stability of the plane waves relative to the invariant set $\Sigma_{13}=0$. For values of $\gamma$ in the interval $(1\frac{10}{7})$, each plane wave line possesses two segments given by
\begin{eqnarray}
-\sqrt{\frac{Q_{1}(\gamma)}{8}}<\tilde{\Sigma}_{23}<-\sqrt{\frac{Q_{2}(\gamma)}{4}}\mbox{ and }\sqrt{\frac{Q_{2}(\gamma)}{4}}<\tilde{\Sigma}
_{23}<\sqrt{\frac{3Q_{1}(\gamma)}{8}},\label{4.5}
\end{eqnarray}
on which the (real parts of the) other four eigenvalues have the same sign. For values of $\gamma$ in the interval ($\frac{10}{7},\frac{3}{2})$ each plane wave line possesses a single segment given by
\begin{eqnarray}
-\sqrt{\frac{Q_{1}(\gamma)}{8}}<\tilde{\Sigma}_{23}<\sqrt{\frac{Q_{1}(\gamma)}{8}},\label{4.6}
\end{eqnarray}
on which the (real parts of the) other four eigenvalues have the same sign.

\subsection{Cosmological Interpretation}

Although it is not currently possible to provide a complete examination of the
full five-dimensional system, we can still provide concrete statements about
the possible behaviours of some of the trajectories, and their corresponding
cosmological models. We have shown that the Collins, Wainwright and
Robinson-Trautman equilibrium points are saddle points in phase space, and so,
non of these equilibrium points can act as a source or sink for generic
trajectories. In particular, the physical features of matter domination
together with asymptotic spatial homogeneity at large spatial distance are not
generic features of the models, as they can be in some of the subclasses of
lower dimension. However, these equilibrium points may act as intermediate
asymptotes for typical trajectories.

On the other hand, the plane waves play a much more important role in phase
space. Although most of the plane wave equilibrium points are saddles, for all
values of the equation of state parameter in the interval $(1,\frac
{3}{2})$, there is always at least one segment on the line with
positive $A$ value which is a source, and there is at least one corresponding
segment on the line with negative $A$ value which is a sink. In addition, we
are aware of trajectories in the invariant set $\Sigma_{13}=0$ which flow
from the source to the sink.

It now follows, from the approximation property of orbits \cite{sibirskiui1975introduction},
that, for all values of the equation of state parameter in the interval
$(  1,\frac{3}{2} )$ there is an open set of trajectories in the
five-dimensional phase space, with the property that each one of them flows
from one plane wave equilibrium point to another plane wave equilibrium point.
Corresponding to each of these trajectories there is a well behaved
cosmological model. The cosmological model is asymptotic to a plane wave model
at large spatial distance, that is as $x \longrightarrow +\infty$ and as $x \longrightarrow -\infty$. We
expect the asymptote to be a different plane wave model in the two different
spatial directions. The cosmological model is vacuum dominated and
acceleration dominated, that is to say, the energy density becomes
insignificant at large spatial distance, in the $\tilde{x}$ direction, with the
acceleration of the $\mathbf{e}_{0}$ congruence persisting. There is only a small
volume of space in which there is non-insignificant spatial inhomogeneity and energy density. A typical density profile appears in Figure 8 of \cite{work-in-progress}.
\section{Conclusions}
In this work we have considered a class of inhomogeneous cosmological models
which possess an Abelian $G_{2}$. This class has been ignored to a large extent due
to the complication that this Abelian $G_{2}$ does not act orthogonally
transitively. It is necessary to examine such models in order to gain insight
into the properties of the general cosmological solutions of the Einstein
field equations.

Due to the complexity of the general class of such models we have imposed two
additional restrictions. The first is that the fluid flow is orthogonal to the
orbits of the Abelian $G_{2}$, a consequence of which is that the fluid is
irrotational. The second is that there is an additional symmetry of a
homothety aligned in such a way that the fluid flow is contained in the
resulting $H_{3}$ orbits. A consequence of this is that the Einstein field
equations reduce to ordinary differential equations describing the spatial
structure, while their temporal evolution is determined by the
self-similarity. The resulting models are of interest both as examples of
inhomogeneous cosmological models which do not have an orthogonally
transitively acting $G_{2}$, and as possible asymptotic states at late times for
non-self-similar models.

We have proved the existence of a well-behaved five dimensional subset of
these inhomogeneous self-similar cosmological models. These models
possess a big bang singularity at a finite time in the past. They are
asymptotically vacuum at large spatial distance where they may be approximated
by a plane wave model, however we refer to them as being acceleration
dominated since the acceleration does not tend to zero at large spatial distance.

\smallskip

We have called the models considered here \textit{exceptional} in analogy with
the exceptional orthogonal Bianchi models and it is useful to compare their
similarities and differences.\\
{\large Similarities:}
\begin{itemize}
	\item Both classes possess an Abelian $G_{2}$ which does not act orthogonally
	transitively leading to off diagonal shear terms.

	\item The fluid flow is orthogonal to the $G_{2}$ orbits and thus it is irrotational.
	
	\item The models possess an additional symmetry, either an isometry or a
	homothety. The EFEs reduce to an ODE, describing evolution in the spatially
	homogeneous case, and describing spatial structure in the self-similar case.
	
	\item In the OT case there are two parameters, the equations of state
	parameter $\gamma$ and the group parameter $h$. In the non OT case these two
	parameters are restricted: in the spatially homogeneous case we have
	$h=\frac{-1}{9}$, and in the self-similar case the restriction is the
	relation, $h=-[  2\frac{(3\gamma-2)}{(5\gamma-6)}]^{2}$.
	
	\item The class of models is five-dimensional and thus has the same size as
	the non-exceptional class, which is four-dimensional, with an
	additional free parameter.
	
	\item The following transitively self-similar solutions appear as equilibrium
	points: the Collins Bianchi $VI_{h}$ solution; the Wainwright Bianchi
	$VI_{\frac{-1}{9}}$($\gamma=\frac{10}{9})$ solution; the Robinson-Trautman
	solution; some spatially homogeneous plane wave solutions.
	
	\item Each phase space contains a one-parameter family of equilibrium points
	which play an important role for the trajectories corresponding to the other models.
\end{itemize}
{\large Differences:}
\begin{itemize}
	\item For spatially homogeneous models, the effect of more variables is to
	reduce the region of the Kasner ring which acts as a source. For example in
	Bianchi I models the whole of the Kasner ring is a source; for the
	non-exceptional Bianchi $VI_{h}$ models, only a part of the Kasner ring is a
	source; for the exceptional Bianchi $VI_{\frac{-1}{9}}$ models the additional
	shear variable has such a dramatic destabilising effect on the Kasner ring
	that no single equilibrium point is a source. All the points on the Kasner
	ring are saddle points, there are oscillations between these Kasner saddle
	points at early times and this results in a chaotic initial singularity( see, for example, \cite{wainwright2005dynamicalSystems}).
	
	\item For the self-similar models, it is the curves of plane wave equilibrium
	points which are important as asymptotes for the other models, at large
	distance. The effect of the additional shear variable is to destabilise the
	plane wave curves, and most of the equilibrium points are saddle points.
	However, the destabilisation is not exhaustive and there is a small segment
	which is either a source or sink. Thus chaotic and oscillatory
	behaviour is not necessarily the fate of these models.
	
	\item The plane waves also appear as equilibrium points for the spatially
	homogeneous models. In the non-exceptional case, plane-wave arcs are late-time
	asymptotes for generic models for certain ranges of the equation of state
	parameter. However, in the exceptional case the additional shear variable
	completely destabilises these arcs and the late time asymptote becomes the
	Robinson-Trautman equilibrium point. Thus the Robinson-Trautman model plays a
	very important role as the only possible late-time asymptote for the generic
	exceptional models. However in the analysis of the self-similar models, the
	Robinson-Trautman equilibrium point is only a saddle point. It acts as an intermediate
	spatial asymptote for some trajectories, but most trajectories are unaffected by its existence: it only plays a minor role in phase space.
\end{itemize}

\smallskip

As far as the Chaotic Inhomogeneity Conjecture is concerned, we have taken a
first step in proving this result by providing a class of inhomogeneous
cosmological models in which the plane wave curves are partially destabilised.
It is likely that by examining a slightly broader class of related models,
e.g. by introducing one-dimension of tilt, it will be possible
to complete this destabilisation.

\appendix

\section{The EFEs For The Irrotational N-OT $G_{2}$ Cosmologies}

In this appendix we derive the EFEs for perfect fluid spacetimes with a linear equation of state, and which admit an Abelian two dimensional isometry group with space-like orbits and where fluid flow is orthogonal to these orbits.  We then dimensionalize the equations. The
dynamical equilibrium states are defined to be the equilibrium points of the
PDE. These equilibrium points correspond to spacetimes which admit, in
addition, a homothetic vector field, aligned so that fluid flow is tangential
to the resulting $H_{3}$ orbits. The ODE for the dynamical equilibrium states are also derived here.\\
When spacetime admits a $G_{2}$, with infinitesimal generators $\boldsymbol \xi$ and $\boldsymbol \eta$, it is
possible \cite{WainWright1979Appen} to select a group-invariant orthonormal frame:
\begin{eqnarray}
\lbrack \mathbf{e}_{a}, \boldsymbol \xi]=0,\qquad\lbrack \mathbf{e}_{a}, \boldsymbol \eta]=0.\label{A.1}
\end{eqnarray}
It is also possible to further align the frame so that $\mathbf{e}_{2}$ and $\mathbf{e}_{3}$ are
tangential to the group orbits (orbit-aligned), and then the commutation
functions ($\gamma_{\;bc}^{a}$) of this group-invariant orbit-aligned frame
are restricted by
\begin{eqnarray}
\mathbf{e}_{A}(\gamma_{\;bc}^{a})=0.\label{A.2a}
\end{eqnarray}
Furthermore, we can, in addition, align the frame so that $\mathbf{e}_{0}=\mathbf{u}$, and refer to
the frame as fluid aligned. Any unit vector which is both invariant under the
action of the $G_{2}$, and is orthogonal to the group orbits,
is HO \cite[p.1134]{wainwright1981exact}. That is both $\mathbf{e}_{0}$ and $\mathbf{e}_{1}$ are HO, and
the commutation functions are also restricted by:
\begin{eqnarray}
\gamma_{\ 0A}^{0}=0,\gamma_{\ 1A}^{0}=0,\gamma_{\ 0A}^{1}=0,\gamma_{\ 1A}
^{1}=0.\label{A.2b}
\end{eqnarray}
Since we have an Abelian $G_{2}$, that is $\boldsymbol \xi$ and $ \boldsymbol \eta$ commute,
it follows that the two vectors that are aligned with the $G_{2}$ orbits,
$\mathbf{e}_{2}$ and $\mathbf{e}_{3}$, commute:%
\begin{eqnarray}
\gamma_{\;23}^{a}=0. \label{A.2c}
\end{eqnarray}
The remaining non-discrete frame-freedom is a rotation in the $G_{2}$ orbits.
The non-zero commutation functions may be expressed in terms of the kinematic
and spatial curvature variables of \cite{Cargesemaccallum1973cosmological} and are:
\begin{eqnarray}
\theta,\quad\sigma_{ab},\quad\dot{u}_{1},\quad a_{1},\quad n_{AB},\quad
\Omega_{a},\label{A.3}
\end{eqnarray}
where,
\begin{eqnarray}
\Omega_{2}=\sigma_{13},\quad\Omega_{3}=-\sigma_{12}.\label{A.4}
\end{eqnarray}
At this point we follow \cite{hewitt1990orthogonally} and introduce a two-dimensional tensor 
formalism to distinguish between quantities which transform as scalars,
vectors or rank 2 tensors under the rotational freedom remaining. We define:
\begin{eqnarray}
n_{+} =\frac{3}{2}n_{\;C}^{C},\;\tilde{n}_{AB}=n_{AB}-\frac{1}{3}
n_{+}\delta_{AB},\;^{\ast}\tilde{n}_{AB}=\tilde{n}_{\;A}^{C}\epsilon
_{BC},\label{A.5a}\\
\sigma_{+} =\frac{3}{2}\sigma_{\;C}^{C},\;\tilde{\sigma}_{AB}=\sigma
_{AB}-\frac{1}{3}\sigma_{+}\delta_{AB},\;^{\ast}\tilde{\sigma}_{AB}
=\tilde{\sigma}_{\;A}^{C}\epsilon_{BC},\label{A.5b}\\
^{\ast}\sigma_{1A} =\varepsilon_{AB}\sigma_{\;1}^{B}\;(=\Omega
_{A}),\label{A.5c}
\end{eqnarray}
where $\varepsilon_{AB}$ is the two-dimensional permutation symbol.
Under the rotational frame-freedom, the variables $\theta,\sigma_{+},\dot
{u}_{1}$, and $a_{1}$ transform as scalars, the variables $\ \sigma_{1A}$
($\Omega_{A}$) transform as the components of a vector and the trace-free
symmetric quantities $\tilde{\sigma}_{AB}$ and $\tilde{n}_{AB}$ tranform as
rank two tensors. The quantities $n_{+}$ and $\Omega_{1}$ are non-tensorial
and transform according to
\begin{eqnarray}
n_{+}=n_{+}+3\mathbf{e}_{1}(\phi),\quad\quad\Omega_{1}=\Omega_{1}+\mathbf{e}_{0}(\phi),\label{A.6}
\end{eqnarray}
where $\phi$ is the angle of rotation of the frame. A 1+1+2 decompositon of the EFEs can now be performed in an analogous manner to that done in \cite{hewitt1990orthogonally}. The evolution equation for $\dot{u}_{1}$, valid for $\mu\neq0$, is obtained by applying the commutator $[\mathbf{e}_{0},\mathbf{e}_{1}]$ to $\mu$.\\ \\
\textit{System of PDEs in the physical variables}\\
Evolution Equations:
\begin{eqnarray}
\fl \mathbf{e}_{0}(\theta)  =-\frac{1}{3}\theta^{2}-\frac{2}{3}\sigma_{+}^{2}
-\tilde{\sigma}^{AB}\tilde{\sigma}_{AB}+(\dot{u}_{1}-2a_{1})\dot{u}_{1}
+ \mathbf{e}_{1} (\dot{u}_{1}) \nonumber\\
\fl \phantom{\mathbf{e}_{0}(\theta)=} -\frac{1}{2}({3\gamma-2})\mu-2\sigma_{1C}\sigma^{1C}, \label{A.7a}\\
\fl \mathbf{e}_{0}(\sigma_{+})  =-\theta\sigma_{+}-\tilde{n}^{AB}\tilde{n}_{AB}-(\dot
{u}_{1}+a_{1})\dot{u}_{1}-\mathbf{e}_{1}(\dot{u}_{1})+\mathbf{e}_{1}(a_{1}) \nonumber\\
\fl \phantom{\mathbf{e}_{0}(\sigma_{+})=} +3\sigma_{1C}\sigma^{1C}, \label{A.7b}
\end{eqnarray}
\begin{eqnarray}
\fl \mathbf{e}_{0}(\tilde{\sigma}_{AB})  =-\theta\tilde{\sigma}_{AB}+2\Omega_{1}^{\ast
}\tilde{n}_{AB}-\frac{2}{3}n_{+}\tilde{n}_{AB}+(2a_{1}-\dot{u}_{1})\,^{\ast}\tilde{n}_{AB} \nonumber\\
\fl \phantom{\mathbf{e}_{0}(\tilde{\sigma}_{AB})=} -\mathbf{e}_{1}(^{\ast}\tilde{n}_{AB})+2\sigma_{1A}\sigma_{1B}-\sigma_{1C}\sigma^{1C}\delta_{AB}, \label{A.7c}\\
\fl \mathbf{e}_{0}(a_{1})  =\frac{1}{3}(2\sigma_{+}-\theta)a_{1}-\frac{1}{3}(\sigma
_{+}+\theta)\dot{u}_{1}-\frac{1}{3}\mathbf{e}_{1}(\sigma_{+}+\theta), \label{A.7d}\\
\fl \mathbf{e}_{0}(n_{+})  =\frac{1}{3}(2\sigma_{+}-\theta)n_{+}+3\tilde{\sigma}
^{AB}\tilde{n}_{AB}+3\dot{u}_{1}\Omega_{1}+3\mathbf{e}_{1}(\Omega_{1}), \label{A.7e}\\
\fl \mathbf{e}_{0}(\tilde{n}_{AB})  =\frac{1}{3}(2\sigma_{+}-\theta)\tilde{n}
_{AB}+2\Omega_{1}\,^{\ast}\tilde{n}_{AB}+\frac{2}{3}n_{+}\tilde{\sigma}
_{AB} \nonumber\\ 
\fl \phantom{\mathbf{e}_{0}(\tilde{n}_{AB})=} +\dot{u}_{1}\,^{\ast}\tilde{\sigma}_{AB}+\mathbf{e}_{1}(^{\ast}\tilde{\sigma}_{AB}), \label{A.7f}\\
\fl \mathbf{e}_{0}(\dot{u}_{1})  =\frac{1}{3}(2\sigma_{+}-\theta)\dot{u}_{1}+(\gamma-1)(\theta\dot{u}_{1}+\mathbf{e}_{1}(\theta)),\label{A.7g}\\
\fl \mathbf{e}_{0}(\sigma_{1A}) =-(\theta+\sigma_{+})\sigma_{1A}+\Omega_{1}\,^{\ast}\sigma_{1A}-\tilde{\sigma}_{\quad A}^{C}\sigma_{1C}.\label{A.7h}
\end{eqnarray}
Constraint Equations:
\begin{eqnarray}
\mathbf{e}_{1}(\sigma_{+}+\theta) =3a_{1}\sigma_{+}-\frac{3}{2}^{\ast}\tilde
{\sigma}^{AB}\tilde{n}_{AB},\label{A.8a}\\
\mathbf{e}_{1}(\sigma_{1A}) =3a_{1}\sigma_{1A}+\,^{\ast}\tilde{n}_{\quad A}
^{B}\sigma_{1B}+\frac{1}{3}n_{+}{}^{\ast}\sigma_{1A}.\label{A.8b}
\end{eqnarray}
Defining Equation for $\mu$:
\begin{eqnarray}
\fl \mathbf{e}_{1}(a_{1}) =\frac{3}{2}a_{1}^{2}-\frac{1}{6}\theta^{2}+\frac{1}
{6}\sigma_{+}^{2}+\frac{1}{4}\tilde{\sigma}^{AB}\tilde{\sigma}_{AB}
+\frac{1}{4}\tilde{n}^{AB}\tilde{n}_{AB}+\frac{1}{2}\mu+\frac{1}{2}
\sigma^{1A}\sigma_{1A}.\label{A.9}
\end{eqnarray}
Auxiliary Equations:
\begin{eqnarray}
\mathbf{e}_{0}(\mu) =-\gamma\mu\theta, \label{A.10a}\\
(\gamma-1)\mathbf{e}_{1}(\mu) =-\gamma\mu\dot{u}_{1}.\label{A.10b}
\end{eqnarray}
Commutator of $\mathbf{e}_{0}$ and $\mathbf{e}_{1}:$%
\begin{eqnarray}
\lbrack \mathbf{e}_{0},\mathbf{e}_{1}]=\dot{u}_{1}\mathbf{e}_{0}+\frac{1}{3}(2\sigma_{+}-\theta
)\mathbf{e}_{1}-2\sigma_{12}\mathbf{e}_{2}-2\sigma_{13}\mathbf{e}_{3}.\label{A.11}
\end{eqnarray}
Dimensionless variables are now defined by normalising with the rate of
expansion scalar, $\theta$:
\begin{eqnarray}
\Sigma_{+} =\frac{\sigma_{+}}{\theta},\;\tilde{\Sigma}_{AB}=\frac
{\tilde{\sigma}_{AB}}{\theta},\;\Sigma_{1A}=\frac{\sigma_{1A}}{\theta}
,\;\dot{U}=\frac{\dot{u}_{1}}{\theta},\;R=\frac{3\Omega_{1}}{\theta
}, \label{A.12a}\\
A =\frac{3a_{1}}{\theta},\;N_{+}=\frac{n_{+}}{\theta},\;\tilde{N}
_{AB}=\frac{\tilde{n}_{AB}}{\theta},\;\Omega=\frac{3\mu}{\theta^{2}
}.\label{A.12b}
\end{eqnarray}
We must also introduce two dimensionless scalars, $q$ and $r$, which are
formed from the derivatives of the expansion scalar:
\begin{eqnarray}
\mathbf{e}_{0}(\theta)=-\frac{1}{3}(1+q)\theta^{2},\quad\quad \mathbf{e}_{1}(\theta)=-\frac
{1}{3}r\theta^{2},\label{A.13}
\end{eqnarray}
and the dimensionless differential operators:
\begin{eqnarray}
\boldsymbol \partial_{a}=\frac{3}{\theta}\mathbf{e}_{a}.\label{A.14}
\end{eqnarray}
\textit{System of PDEs in dimensionless Variables}
Evolution Equations
\begin{eqnarray}
\fl \boldsymbol \partial_{0}(\Sigma_{+})   =(q-2)\Sigma_{+}-3\tilde{N}_{AB}\tilde{N}
^{AB}-(3\dot{U}-r+A)\dot{U} \nonumber \\
\fl \phantom{\partial_{0}(\Sigma_{+})=} -\frac{1}{3}rA-\boldsymbol \partial_{1}(\dot{U})+\frac{1}{3}\boldsymbol \partial_{1}(A)+9\Sigma_{1A}\Sigma^{1A}, \label{A.15a}\\
\fl \boldsymbol\partial_{0}(\tilde{\Sigma}_{AB})=(q-2)\tilde{\Sigma}_{AB}+2R\,^{\ast}\tilde{\Sigma}_{AB}-2N_{+}\tilde{N}_{AB}-\boldsymbol \partial_{1}(\,^{\ast}\tilde{N}_{AB}) \nonumber \\
\fl \phantom{\partial_{0}(\tilde{\Sigma}_{AB})=} -(3\dot{U}-r-2A)\,^{\ast}\tilde{N}_{AB}+6\Sigma_{1A}\Sigma_{1B}-3\Sigma_{1C}\Sigma^{1C}\delta_{AB}, \label{A.15b}\\
\fl \boldsymbol\partial_{0}(A)=(q+2\Sigma_{+})A-(3\dot{U}-r)(1+\Sigma_{+})-\partial_{1}(\Sigma_{+}),\label{A.15c}\\
\fl \boldsymbol\partial_{0}(N_{+})=(q+2\Sigma_{+})N_{+}+9\tilde{\Sigma}^{AB}\tilde{N}_{AB}+(3\dot{U}-r)R+\boldsymbol \partial_{1}(R),\label{A.15d}\\
\fl \boldsymbol\partial_{0}(\tilde{N}_{AB})=(q+2\Sigma_{+})\tilde{N}_{AB}+2R\,^{\ast}\tilde{N}_{AB}+2N_{+}\tilde{\Sigma}_{AB} \nonumber\\ 
\fl \phantom{\partial_{0}(\tilde{N}_{AB})=} +(3U-r)\,^{\ast}\tilde{\Sigma}_{AB}+\boldsymbol\partial_{1}(\,^{\ast}\tilde{\Sigma}_{AB}),\label{A.15e}\\
\fl \boldsymbol \partial_{0}(\dot{U})=(q+2\Sigma_{+})\dot{U}+(\gamma-1)(3\dot
{U}-r),\label{A.15f}\\
\fl \boldsymbol\partial_{0}(\Sigma_{1A})=(q-2-3\Sigma_{+})\Sigma_{1A}+R\,^{\ast}\Sigma_{1A}-3\tilde{\Sigma}_{\quad A}^{C}\Sigma_{1C}.\label{A.15g}
\end{eqnarray}
Defining equations for $r$, $\Omega$ and $q$:
\begin{eqnarray}
\fl \boldsymbol \partial_{1}(\Sigma_{+})  =3A\Sigma_{+}-\frac{9}{2}\,^{\ast}\tilde{\Sigma
}^{AB}\tilde{N}_{AB}+r(1+\Sigma_{+}),\label{A.16a}\\
\fl \boldsymbol \partial_{1}(A) =(\frac{3}{2}A+r)A+\frac{9}{4}(\tilde{\Sigma}^{AB}
\tilde{\Sigma}_{AB}+\tilde{N}^{AB}\tilde{N}_{AB})+\frac{3}{2}\Sigma_{+}^{2}-\frac{3}{2}+\frac{3}{2}\Omega+\frac{9}
{2}\Sigma_{1A}\Sigma^{1A},\label{A.16b}\\
\fl \boldsymbol \partial_{1}(\dot{U}) =-(3\dot{U}-r-2A)\dot{U}+2\Sigma_{+}^{2}
+3\tilde{\Sigma}^{AB}\tilde{\Sigma}_{AB}+\frac{1}{2}(3\gamma-2)\Omega-q+6\Sigma_{1A}\Sigma^{1A}. \label{A.16c}
\end{eqnarray}
Constraint Equation: 
\begin{eqnarray}
\boldsymbol \partial_{1}(\Sigma_{1A})=(3A+r)\Sigma_{1A}+3\,^{\ast}\tilde{N}_{A}^{\quad B}\Sigma_{1B}+N_{+}\,^{\ast}\Sigma_{1A}.\label{A.17}
\end{eqnarray}
Auxiliary Equations:
\begin{eqnarray}
\boldsymbol \partial_{0}(r)-\boldsymbol \partial_{1}(q)=(3\dot{U}-r)(1+q)+(q+2\Sigma_{+})r, \label{A.18a}\\
\boldsymbol \partial_{0}(\Omega)=[2q-(3\gamma-2)]\Omega, \label{A.18b}\\
(\gamma-1)\boldsymbol \partial_{1}(\Omega) =[2(\gamma-1)r-3\gamma\dot{U}]\Omega. \label{A.18c}
\end{eqnarray}
Commutator of $\boldsymbol \partial_{0}$and $\boldsymbol \partial_{1}:$
\begin{eqnarray}
\lbrack\boldsymbol \partial_{0},\boldsymbol \partial_{1}]=(3\dot{U}-r)\boldsymbol \partial_{0}+(q+2\Sigma
_{+})\boldsymbol \partial_{1}-6\Sigma_{12}\boldsymbol \partial_{2}-6\Sigma_{13}\boldsymbol \partial_{3}.\label{A.19}
\end{eqnarray}
Decoupled Equations:
\begin{eqnarray}
\boldsymbol \partial_{0}(\theta)=-(1+q)\theta,\qquad\boldsymbol \partial_{1}(\theta)=-r\theta.\label{A.20}
\end{eqnarray}
We consider the vector
\begin{eqnarray}
\mathbf{X}=(\Sigma_{+},\tilde{\Sigma}_{AB},A,N_{+},\tilde{N}_{AB},\dot
{U},\Sigma_{1A})\label{A.21}
\end{eqnarray}
as providing the dynamical state of a model.\\
It is pertinant at this point to note that were we to restrict the equations by $\boldsymbol \partial_{1}\mathbf{X}=\mathbf{0,}$ then it follows from (\ref{A.19}) and (\ref{A.18a})-(\ref{A.18c}) that the corresponding models must satisfy one of the following:
\begin{enumerate}
	\item $\boldsymbol \partial_{0}\mathbf{X}=\mathbf{0,}$ in this case the models are irrotational transitively self-similar. These are all well-known exact solutions (see \cite{hsu1986self}), and they arise as equilibrium points in the models which we examine.
	
	\item $\gamma=2$, in this case the models are spatially self-similar and stiff
	- these have been discussed at length by various authors (see \cite{mcintosh1976homothetic}
	and \cite{kramer1980exact} section 23.1).
	
	\item  $\Omega=0$, in this case the models are both spatially self-similar and vacuum.
	
	\item $\dot{U}=0$, in this case the models are orthogonally spatially homogeneous of Bianchi type I-VII. These models are separated into non-exceptional (orthogonally-transitive), and exceptional (non-orthogonally-transitive), according to the vanishing, or non-vanishing, of $\Sigma_{1A}$. In the latter case it is the \textit{constraint equation for} $\Sigma_{1A}$ (\ref{A.17}) which forces the well known restriction on the group
	parameter, namely $h=\frac{-1}{9}$. In the models considered below it is the \textit{evolution equation for }$\Sigma_{1A}$ (\ref{A.15g}) which yields the analogous restriction discussed after equation (\ref{A.35}).
\end{enumerate}
The dynamical equilibrium states of the $G_{2}$ cosmologies are defined (see
\cite{hewitt1990orthogonally}) by imposing the additional condition of
\begin{eqnarray}
\boldsymbol \partial_{0}\mathbf{X}=\mathbf{0}.\label{A.22}
\end{eqnarray}
It follows from the EFEs that
\begin{eqnarray}
\boldsymbol \partial_{0}\boldsymbol \partial_{1}\mathbf{X}=\mathbf{0,}\label{A.23}
\end{eqnarray}
almost everywhere.The defining equations for $r,q$ and $\Omega$ now yield
\begin{eqnarray}
\boldsymbol \partial_{0}r=0,\qquad\boldsymbol \partial_{0}\Omega=0,\qquad\boldsymbol \partial_{0}q=0,\label{A.24}
\end{eqnarray}
and the commutator of $\boldsymbol \partial_{0}$ and $\boldsymbol \partial_{1}$ now implies that
\begin{eqnarray}
(q+2\Sigma_{+})\boldsymbol \partial_{1} \mathbf{X}=0,\label{A.25}
\end{eqnarray}
and so there are the two possibilities: either
\begin{eqnarray}
& \boldsymbol \partial_{1} \mathbf{X}=0,\label{A.26a}\\
\mbox{or } & \boldsymbol \partial_{1} \mathbf{X} \neq 0,\qquad(q+2\Sigma_{+})=0.\label{A.26b}
\end{eqnarray}
Models which satisfy the former condition have been discussed above, see the
section after equation (\ref{A.21}). We now consider those models which
satisfy the latter conditions, which is our primary intent, these are referred to as the spatially
inhomogeneous dynamical equilibrium states of the $G_{2}$ cosmologies. For
non-vacuum models we can use the $\boldsymbol \partial_{0}\dot{U}$ equation to deduce that
\begin{eqnarray}
3\dot{U}-r=0,\label{A.27}
\end{eqnarray}
and the $\boldsymbol \partial_{0}A$ equation now yields:
\begin{eqnarray}
\boldsymbol \partial_{1}\Sigma_{+}=0.\label{A.28}
\end{eqnarray}
Furthermore, the $\boldsymbol \partial_{0}\Omega$ equation now implies that:
\begin{eqnarray}
q=\frac{1}{2}(3\gamma-2),\quad\Sigma_{+}=-\frac{1}{4}(3\gamma-2).\label{A.29}
\end{eqnarray}
The following proposition characterises the spatially inhomogeneous dynamical equilibrium states as being self-similar:
\begin{prop}
	An irrotational N-OT $G_{2}$ cosmology is a spatially inhomogeneous dynamical equilibrium state if and only if the spacetime is self-similar, admitting a maximal $H_{3}$ acting on the hypersurfaces generated by the two Killing vector fields and the fluid 4-velocity. 
\end{prop}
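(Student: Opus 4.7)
\medskip
\textbf{Plan of proof.} The argument separates into the two implications of the biconditional. For the \emph{if} direction, assume a maximal $H_{3}$ acts with orbits spanned by $\mathbf{e}_{0},\boldsymbol{\xi},\boldsymbol{\eta}$. After subtraction of a suitable linear combination of $\boldsymbol{\xi},\boldsymbol{\eta}$, the third (homothetic) generator $\mathbf{Y}$ can be taken to lie in the span of $\mathbf{e}_{0},\mathbf{e}_{2},\mathbf{e}_{3}$. Every component of the state vector $\mathbf{X}$ defined in (\ref{A.21}) is a commutation function normalised by the appropriate power of $\theta$ so as to be a scalar of conformal weight zero, hence each component satisfies $\mathbf{Y}(X_{i})=0$. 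Group invariance of the frame already gives $\boldsymbol{\xi}(\mathbf{X})=\boldsymbol{\eta}(\mathbf{X})=\mathbf{0}$, so $\mathbf{X}$ is constant on every $H_{3}$ orbit. In particular $\mathbf{e}_{0}(\mathbf{X})=\mathbf{0}$ and, after rescaling by $3/\theta$, this is exactly the DES condition $\boldsymbol{\partial}_{0}\mathbf{X}=\mathbf{0}$.

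For the \emph{only if} direction, I would start from $\boldsymbol{\partial}_{0}\mathbf{X}=\mathbf{0}$ together with spatial inhomogeneity $\boldsymbol{\partial}_{1}\mathbf{X}\neq\mathbf{0}$. The derivation culminating in equations (\ref{A.22})--(\ref{A.29}) already supplies the constants $q=\tfrac{1}{2}(3\gamma-2)$, $\Sigma_{+}=-\tfrac{1}{4}(3\gamma-2)$, and the relation $3\dot{U}-r=0$. Combining the decoupled equations (\ref{A.20}) with these constants, I integrate $\mathbf{e}_{0}(\theta)=-\tfrac{1}{3}(1+q)\theta^{2}$ and $\mathbf{e}_{1}(\theta)=-\tfrac{r}{3}\theta^{2}$ along the commuting vector fields $\mathbf{e}_{0},\mathbf{e}_{1}$ to express $\theta^{-1}$ as the sum of a linear function of proper time along $\mathbf{e}_{0}$ and a function of $\tilde{x}$ alone. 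Introducing a dimensionless similarity time $\tau$ by $\theta\propto e^{-\tau}$ along each integral curve and using the commutator (\ref{A.19}) at the DES values, I would show that the metric in coordinates $(\tau,\tilde{x},y,z)$ adapted to $\boldsymbol{\xi},\boldsymbol{\eta}$ and $\mathbf{u}$ factorises as $g_{ab}(\tau,\tilde{x})=e^{2\tau/k}\,\hat{g}_{ab}(\tilde{x})$ for a $\gamma$-dependent constant $k$. The vector field $\mathbf{Y}=k\,\partial_{\tau}$ then satisfies $\mathcal{L}_{\mathbf{Y}}g=2g$, is independent of $\boldsymbol{\xi},\boldsymbol{\eta}$, and by construction lies in the span of $\mathbf{e}_{0},\mathbf{e}_{2},\mathbf{e}_{3}$; together with $\boldsymbol{\xi}$ and $\boldsymbol{\eta}$ it generates the required three-parameter similarity group whose orbits are the hypersurfaces generated by the KVFs and $\mathbf{u}$.

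Maximality of the constructed $H_{3}$ follows from the spatial inhomogeneity hypothesis: any further independent symmetry would either enforce $\boldsymbol{\partial}_{1}\mathbf{X}=\mathbf{0}$ (contradicting the assumption) or would coincide with one of $\boldsymbol{\xi},\boldsymbol{\eta},\mathbf{Y}$ modulo a linear combination. \textbf{The main obstacle} is executing the coordinate construction in the second paragraph rigorously: one has to leverage the full package of DES relations ($\boldsymbol{\partial}_{0}q=\boldsymbol{\partial}_{0}r=\boldsymbol{\partial}_{0}\Omega=0$, $3\dot{U}=r$, and the constant values of $\Sigma_{+},q$) together with (\ref{A.19}) to guarantee that the rescaling by $e^{2\tau/k}$ produces a genuine homothety rather than merely a conformal motion, and to verify that $\partial_{\tau}$ can be pushed into the tangent space of the $\boldsymbol{\xi},\boldsymbol{\eta},\mathbf{u}$ hypersurfaces by an allowed change of frame. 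Once this self-similar coordinate system is in hand, the remaining verifications that $\mathbf{Y}$ commutes appropriately with $\boldsymbol{\xi},\boldsymbol{\eta}$ and that the similarity group is three-dimensional are routine.
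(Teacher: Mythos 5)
The paper does not actually prove this proposition: its entire ``proof'' is a citation to the orthogonally transitive case in \cite{hewitt1990orthogonally} together with the assertion that the same argument carries over verbatim to the non-orthogonally transitive setting. Your sketch is, in effect, a reconstruction of that deferred argument, and its overall shape is the standard and correct one: dimensionless (conformal-weight-zero), group-invariant scalars are constant on $H_{3}$ orbits, which gives the easy ``if'' direction, while the ``only if'' direction is the Eardley-style construction of a homothetic generator from the constancy of $q$ and the decoupling of $\theta$. Two remarks. First, the one point that actually distinguishes the N-OT case from the OT case --- and hence the only thing the paper's one-line proof implicitly asserts --- is that the new terms $-6\Sigma_{12}\boldsymbol{\partial}_{2}-6\Sigma_{13}\boldsymbol{\partial}_{3}$ in the commutator (\ref{A.19}) (equivalently $-2\sigma_{12}\mathbf{e}_{2}-2\sigma_{13}\mathbf{e}_{3}$ in (\ref{A.11})) annihilate every $G_{2}$-invariant quantity by (\ref{A.2a}), so they drop out of all the integrability conditions used in the OT proof; you use (\ref{A.19}) but never say this explicitly, and it is the crux of why the old proof survives. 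Second, you honestly flag that the metric factorisation $g_{ab}=e^{2\tau/k}\hat{g}_{ab}(\tilde{x})$ is only a plan; note also that $r=3\dot{U}$ is \emph{not} a constant but a function of $\tilde{x}$, so the integration of $\mathbf{e}_{1}(\theta^{-1})=r/3$ and the non-commutativity of $\mathbf{e}_{0}$ and $\mathbf{e}_{1}$ must be handled together, exactly as you anticipate. Neither point is a fatal gap, but both would need to be supplied before the sketch becomes a proof; as it stands your proposal already contains strictly more argument than the paper itself.
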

\begin{proof}
	This result have been proven for the Orthogonally Transitive $G_{2}$ Cosmologies in \cite{hewitt1990orthogonally}, and the same proof may be used in the non-orthogonally transitive case.	
\end{proof}
The third set of equations which we provide is for the spatially inhomogeneous
dynamical equilibrium states of the $G_{2}$ cosmologies. These equations
consist of spatial structure equations, which involve the spatial derivative
operator $\boldsymbol \partial_{1}$, and constraint equations.\\ \\
\textit{Spatial structure equations:}
\begin{eqnarray}
\fl \boldsymbol \partial_{1}(\dot{U})=\frac{1}{3}[A^{2}-(1+\Sigma_{+})^{2}]+\frac{3}
{2}\tilde{\Sigma}^{AB}\tilde{\Sigma}_{AB}-\frac{3}{2}\tilde{N}^{AB}\tilde{N}_{AB}+\frac{1}{2}\gamma\Omega+9\Sigma_{1A}\Sigma^{1A},\label{A.30a}\\
\fl \boldsymbol\partial_{1}(A)=2[A^{2}-(1+\Sigma_{+})^{2}]+\frac{3}{2}(2-\gamma)\Omega+9\Sigma_{1A}\Sigma^{1A},\label{A.30b}\\
\fl \boldsymbol\partial_{1}(^{\ast}\tilde{N}_{AB})=(q-2)\tilde{\Sigma}_{AB}+2R\,^{\ast}\tilde{\Sigma}_{AB}-2N_{+}\tilde{N}_{AB} \nonumber\\
\fl \phantom{\partial_{1}(^{\ast}\tilde{N}_{AB})}+2A\,^{\ast}\tilde{N}_{AB}+6\Sigma_{1A}\Sigma_{1B}-3\Sigma_{1C}\Sigma
^{1C}\delta_{AB},\label{A.30c}\\
\fl \boldsymbol\partial_{1}(R)=-9\tilde{\Sigma}^{AB}\tilde{N}_{AB}\label{A.30d},\\
\fl \boldsymbol\partial_{1}(^{\ast}\tilde{\Sigma}_{AB})=-2R\,^{\ast}\tilde{N}_{AB}-2N_{+}\tilde{\Sigma}_{AB},\label{A.30e}\\
\fl \boldsymbol\partial_{1}(\Sigma_{1A})=3(A+\dot{U})\Sigma_{1A}+3\,^{\ast}\tilde{N}_{\;A}^{B}\Sigma_{1B}+N_{+}\,^{\ast}\Sigma_{1A}.\label{A.30f}
\end{eqnarray}
Defining equations for $\Omega$:
\begin{eqnarray}
\fl \Omega =\frac{1}{3(\gamma-1)}[A(A-6\dot{U})-\frac{9}{2}\tilde{N}
^{AB}\tilde{N}_{AB}-\frac{9}{2}\tilde{\Sigma}^{AB}\tilde{\Sigma}_{AB}-(1+\Sigma_{+})(1+7\Sigma_{+}) \nonumber  \\
+9\Sigma_{1A}\Sigma^{1A}]. \label{A.31}
\end{eqnarray}
Constraint Equations:
\begin{eqnarray}
0  & =(q-2-3\Sigma_{+})\Sigma_{1A}+R\;^{\ast}\Sigma_{1A}-3\tilde{\Sigma
}_{\quad A}^{C} \tilde{\Sigma}_{1C},\label{A.32a}\\
0  & =A\Sigma_{+}-\frac{3}{2}^{\ast}\tilde{\Sigma}^{AB}\tilde{N}_{AB}+\dot
{U}(1+\Sigma_{+}).\label{A.32b}
\end{eqnarray}
Auxiliary Equations:
\begin{eqnarray}
(\gamma-1)\boldsymbol \partial_{1}(\Omega)=3\dot{U}(\gamma-2)\Omega.\label{A.33}
\end{eqnarray}
The system admits a first integral, $k$, given by:
\begin{eqnarray}
k=\frac{9}{2}\tilde{\Sigma}^{AB}\tilde{\Sigma}_{AB}-R^{2}.\label{A.34}
\end{eqnarray}
Constraint equation (\ref{A.32a}) leads to two possibilities. The first of these is that
\begin{eqnarray}
\Sigma_{12}=\Sigma_{13}=0,\label{A.35}
\end{eqnarray}
in which case the $G_{2}$ acts orthogonally transitively (these models have
been considered before \cite{hewitt1997dynamical}). The second possibility is that
\begin{eqnarray}
k =\frac{9}{16}(6-5\gamma)^{2},\label{A.36a}\\
0 =[\frac{3}{4}(5\gamma-6)-3\tilde{\Sigma}_{22}]\Sigma_{12}+(R-3\tilde
{\Sigma}_{23})\Sigma_{13}.\label{A.36b}
\end{eqnarray}
We concentrate on this second class of models in this paper. For these
models the Abelian $G_{2}$ does not necessarily act orthogonally transitively,
equivalently equation (\ref{A.35}) does not hold. Instead there is a restriction on
the parameters in the model, namely $k$ and $\gamma$ are related by expression
(\ref{A.36a}), and, in addition, the two additional shear variables are related by
expression (\ref{A.36b}). Since their EFEs reduce to a five-dimensional ODE with a single parameter ($\gamma$), they are as general as the non-exceptional parallel self-similar cosmologies, whose EFEs reduce to a four-dimensional ODE with two parameters ($k$ and $\gamma$) see \cite{hewitt1997dynamical} for more details. There is exactly the same relationship between the exceptional and the non-exceptional orthogonal Bianchi cosmologies, including the origin and imposition of the constraint, which is, of course, $h=-\frac{1}{9}$, in the exceptional case.\\
The Bianchi type of the $H_{3}$ is given by the Table \ref{TableofBianchiTypeofH3}.
\begin{table}[h!]
	\caption{The Bianchi Type of the Models.}
	\label{TableofBianchiTypeofH3}
\begin{tabular}{@{}*{9}{l}}
	\br
	Bianchi type & $I$ & $II$ & $VI_{0}$ & $VII_{0}$ & $V$ & $IV$ & $VI_{h}$ &
	$VII_{h}$\\\hline
	$\Sigma_{+}=0$ (equivalently $\gamma=\frac{2}{3}$) & $0$ & $0$ & $0$ &
	$0$ & $\neq0$ & $\neq0$ & $\neq0$ & $\neq0$\\\hline
	$\tilde{\Sigma}^{AB}\tilde{\Sigma}_{AB}$ & $0$ & $\neq0$ &  &  & $0$ &  &  &
	\\\hline
	$R$ & $0$ & $\neq0$ &  &  & $0$ &  &  & \\\hline
	$k$ & $0$ & $0$ & $<0$ & $>0$ & $0$ & $0$ & $<0$ & $>0$\\\hline
	\end{tabular}
\end{table}
The value of the group parameter $h$ is related to the first integral $k$ by
\begin{eqnarray}
h=\frac{-36\Sigma_{+}^{2}}{k}.\label{A.37}
\end{eqnarray}
Since we impose the restriction (\ref{A.36a}), and the value of $\Sigma_{+}$ is
given in (\ref{A.29}), it follows that this exceptional class of models is of
Bianchi type $VI_{h}$ with
\begin{eqnarray}
h=-\left[  2\frac{(3\gamma-2)}{(5\gamma-6)}\right]  ^{2}, \label{A.38}
\end{eqnarray}
(n.b. this is the Bianchi type of the $H_{3}$ ). Thus, the exceptional class of
self-similar models is forced to admit a restriction which connects the
equation of state parameter and the Bianchi type of the $H_{3}$. The existence
of such a restriction is an analogous situation that occurs for the exceptional
Bianchi models, however in that case the restriction is much simpler
($h=-\frac{1}{9})$.
In the paper of \cite{hewitt1997dynamical}, studying the orthogonally transitively acting
subclass, a parameter $k^{\ast}$ was defined as%
\begin{eqnarray}
k^{\ast}=-(1+\Sigma_{+})(1+7\Sigma_{+})=\frac{9}{16}(2-\gamma)(7\gamma
-6),\label{A.39}
\end{eqnarray}
and it was explained that well behaved models arise when the combination
$k-k^{\ast}$ is non-negative. We note that this combination takes the
following value here,
\begin{eqnarray}
k^{\ast}-k=9(\gamma-1)(3-2\gamma),\label{A.40}
\end{eqnarray}
and we assume that this quantity is non-negative throughout.

\ack
The author is happy to thank S. Rashidi for comments on this work and for the typesetting of this document.

\section*{References}
\bibliographystyle{iopart-num}
\bibliography{references1}
\end{document}